\newtheorem{theorem}{Theorem}[section]
\newtheorem{lemma}[theorem]{Lemma}
\newenvironment{proof}[1][Proof]{\begin{trivlist}
\item[\hskip \labelsep {\bfseries #1}]}{\end{trivlist}}
\begin{document}

\title{Distributed Monitoring for Prevention of Cascading Failures in  Operational Power Grids}

\author[1]{Martijn Warnier\thanks{Corresponding Author.
    email:M.E.Warnier@tudelft.nl, tel:+31 (0)15 27 82232}}
\author[2]{Stefan Dulman}
\author[1]{Yakup Ko\c{c}}
\author[2]{Eric Pauwels}
\affil[1]{Systems Engineering, Faculty of Technology, Policy and Management,
Delft University of Technology, Delft}
\affil[2]{Intelligent Systems group, Centrum Wiskunde \& Informatica
  (CWI), Amsterdam}

% make the title area
\maketitle

%\todo{After individuals sections are rewritten, the whole paper needs to be unified (MW)}

\begin{abstract}
  Electrical power grids are vulnerable to cascading failures that can
  lead to large blackouts. Detection and prevention of cascading
  failures in power grids is important. Currently, grid operators
  mainly monitor the state (loading level) of individual components in
  power grids. The complex architecture of power grids, with many
  interdependencies, makes it difficult to aggregate data provided by
  local components in a timely manner and meaningful way: monitoring
  the resilience with respect to cascading failures of an operational
  power grid is a challenge.

  This paper addresses this challenge. The main ideas behind the paper
  are that (i) a robustness metric based on both the topology and the
  operative state of the power grid can be used to quantify power grid
  robustness and (ii) a new proposed a distributed computation method
  with self-stabilizing properties can be used to achieving near
  real-time monitoring of the robustness of the power grid. Our
  contributions thus provide insight into the resilience with respect
  to cascading failures of a dynamic operational power grid at
  runtime, in a scalable and robust way. Computations are pushed into
  the network, making the results available at each node, allowing
  automated distributed control mechanisms to be implemented on top.
\end{abstract}

\section{Introduction}
\label{sec:introduction}

Power grids represent critical infrastructure: all kind of services
(basic services, governmental and private) depend on the continuous
and reliable delivery of electricity. Power grid outages have a large
effect on society, both in terms of safety and in terms of economic
loss. The large-scale introduction of ``renewable energy sources'' and
the current (centralized) architecture of the power grid make it more
likely that large power outages will become more common. Encouraged by
government subsidies and a trend to become more ``green'', consumers
are becoming producers of electricity by installing solar panels and
wind mills~\cite{masters2013renewable}. Part of this produced power
will be used locally, but excess power can be sold and fed back into
the power grid. This in turn leads to grid
instability~\cite{zhang2010impact}: it is more difficult to predict, 
and hence balance, electricity production when there is a large
amount of small producers spread over a large geographical region,
instead of a couple of large producers. The current power grid
architecture does not support the introduction of renewables at this
scale~\cite{Albert2004}. 

The current organization of the power grid thus makes larger grid
failures more likely to occur: initial local disruptions can spread to
the rest of a power grid evolving into a system-wide outage. In a
power grid, an initial failure can, for example, be caused by an
external event such as a storm, and spreads to the rest of the network
in different ways including due to causes such as instability of
voltage and frequency, hidden failures of protection systems, software
or operator errors, and line overloads. For example, in the case of
cascades due to line overloads, an overloaded line is "tripped'' by a
circuit breaker. At this point electricity can no longer flow through
the line, and the power contained in the line flows to other
lines. This might lead to overloading (part of) these lines causing
them to be tripped as well. As this process repeats over and over
again, more lines are shut down, leading to a \emph{cascading failure}
of the power grid~\cite{Dobson2002,Vailman2011}. Cascading effect due
to line overloads, and preventing such cascading failures form the main
focus of this paper.

In order to detect (and ultimately prevent) cascading failures it is
necessary to monitor (and alter) the current state (power load
distribution) of the power grid. The emerging Smart Grid provides
exactly this: a power grid with a communication overlay that connects
sensors and effectors. In effect, a Smart Grid is a large-scale
distributed system that enables the monitoring of line loads and that
enables changing the state of the network by tripping and untripping
lines. In the remainder of the paper a Smart Grid is assumed.

Given this context of the Smart Grid, this paper addresses two main
research questions: \emph{What should be monitored?}, i.e., is there a
\emph{metric} that can be used for cascading failure prediction?
\emph{How to monitor?}, i.e., how should aggregation be performed and
which temporal resolution is required for the monitoring. In addition,
it should be possible to extend the proposed (passive) monitoring
scheme to an (active) scheme that automatically alters the state of
the grid in order to prevent cascading failures. 

The main contribution of the paper is a new distributed monitoring
approach that can be used to monitor the robustness of the power grid
with respect to cascading failures. The monitoring approach is
based on the distributed computation of the robustness metric we
introduced in~\cite{Koc:373,Koc:862}. Our contributions in this paper
include the extension of a distributed gossiping
algorithm~\cite{boyd2005gossip} with self-stabilization mechanisms to
account for network dynamics. The resulting framework allows
distributed aggregates to be computed fast and reliable, which forms
the core of the proposed monitoring approach.

Our main results show that we are able to compute the complex
robustness metric using simple robust distributed primitives with
results readily made available at each node in the network. This is an
important property as the mechanisms presented in this paper can be
seen as a measurement framework to be used in real-time for the design
of distributed control mechanisms. Our approach scales very well with
network size (logarithmic order) in terms of convergence time. The
precision of the computations can be fixed by changing the message
sizes and is independent on the network parameters (number of nodes, diameter, etc.).

The remainder of this paper is organized as follows: Section~\ref{sec:robustn-metr-monit} introduces the metric
used to assess the robustness of the power grid with respect to
cascading failures.  Section~\ref{sec:decentr-aggr} presents the
distributed algorithm for the online computation of the robustness
metric. Section~\ref{sec:analysis-discussion} discusses the simulation
results that show the applicability the proposed
approach. Section~\ref{sec:syst-health-monit} presents the 
current state of the art in power grid monitoring and cascading failure 
detection. Section~\ref{sec:conclusions} concludes the paper.

\section{Robustness Metric}
\label{sec:robustn-metr-monit}

Different topological metrics have been identified in literature that
indicate the vulnerability of a power grid against cascading failures
on the basis of which the most critical nodes in a network are
identified. Examples of such topological metrics are average shortest
path length, betweenness centrality~\cite{freeman1977set} and the gap
metric~\cite{el1985gap}.  However, next to a topological aspect, power
grids also have a physical aspect. In particular, electrical current
in a power grid behaves according to Kirchoff's
laws~\cite{belevitch1962summary}. A metric that quantifies the
robustness of an \emph{operational} power grid with respect to
cascading failures should take both these aspects into account. Our
robustness metric from~\cite{Koc:373,Koc:862} does exactly this, and
it therefore forms the starting point for the distributed power grid
monitoring algorithm proposed in this paper. The robustness metric
$R_{CF}$ (for Robust against Cascading Failures) assess the robustness
of a given power grid with respect to cascading failures due to line
overloads. The metric relies on two main concepts: electrical nodal
robustness and electrical node significance. Higher values of $R_{CF}$
indicate a robuster, i.e., more able to resist cascading failures,
power grid. The remainder of this section provides a summary from our
earlier work on robustness metrics, we refer to~\cite{Koc:373,Koc:862}
for more details.

\subsection{Electrical Nodal Robustness}
\label{subsec_electrical nodal robustness}

The \emph{electrical nodal robustness} quantifies the ability of a bus
(i.e. a node in a graph representation of a power grid) to resist the
cascade of line overload failures by incorporating both flow dynamics
and network topology. In order to calculate this value for a node,
three factors are of importance: (i) the homogeneity of the load
distribution on out-going branches (i.e. links in a graph
representation of a power grid); (ii) the loading level of the
out-going links; and (iii) the out-degree of the node.

Entropy is used to capture the first and the last factors described
above: the entropy of a load distribution at a node increases as flows
over lines are distributed more homogeneously and the node out-degree
increases. The entropy of a given load distribution at a node $i$ is
computed by Equation~\eqref{Entropy}:

%% Entropy formula: Formula 1
\begin{equation}\label{Entropy}
H_{i}=\sum_{j=1}^{d} p_{ij} \log p_{ij}
\end{equation}

 where $d$ refers to the out-degree of the corresponding node, whereas $p_{ij}$ corresponds to normalized flow values on the out-going links \emph{$l_{ij}$}, given as:

%% Normalized flow formula: Formula 2
\begin{equation}\label{eq:normflow}
p_{ij}=\frac{f_{ij}}{\sum_{j=1}^{d} f_{ij}}
\end{equation}

%The
%entropy of a normalized power flow distribution is by definition:
%%
%%% Entropy formula: Formula 1
%\begin{equation}\label{Entropy}
%H=-\sum_{i=1}^{L} p_{i} \log p_{i},
%\end{equation}
%%
%where $p_{i}$ represents the flow distribution values, and $L$ refers
%to the number of the samples in the distribution. Applying
%Equation \eqref{Entropy} to electrical nodal robustness leads to 
%$p_{i}$ (representing normalized flow values on the out-going links):
%%
%%% Normalized flow formula: Formula 2
%\begin{equation}\label{eq:normflow}
%p_{i}=\frac{f_{i}}{\sum_{j=1}^{L} f_{j}},
%\end{equation}
%%
where $f_{ij}$ refers to the flow value in line \emph{$l_{ij}$}. To
model the effect of the loading level of the power grid the tolerance
parameter $\alpha$ is used (see~\cite{Motter2002}). The tolerance
level of a line \emph{$l_{ij}$}, $\alpha_{ij}$, is the ratio between
the rated limit and the load of the corresponding line
\emph{$l_{ij}$}.

Combining Equations~\eqref{Entropy} and \eqref{eq:normflow} with the
tolerance parameter $\alpha$ to capture the impact of loading level on
the robustness, the electrical nodal robustness of a node $i$
(i.e. $R_{n,i}$), which takes both the flow dynamics and topology
effects on network robustness into account, is then defined as:

%% Nodal robustness: Formula 5
\begin{equation}\label{eq:rn}
R_{n,i}=-\sum_{j=1}^{d} \alpha _{ij}p_{ij} \log p_{ij}
\end{equation}

In Equation~\eqref{eq:rn}, the minus sign (-) is used to compensate
the negative electrical nodal robustness value that occurs due to
taking the logarithm of normalized flow values.

\subsection{Electrical Node Significance}
\label{subsec_Electrical node significance}

Not all nodes in a power grid have the same influence on the
occurrence of cascading failures. Some nodes distribute a relatively
large amount of the power in the network, while other nodes only
distribute a small amount of power. When a node (or line to a node)
that distributes a relatively large amount of power fails, the result
is more likely to lead to a cascading failure, ultimately resulting in
a large grid blackout. In contrast, if a node that only distributes a
small amount of power fails, the resulting redistribution of power can
usually be accommodated by the other parts of the network.  Thus, node
failures have a different impact on the context of cascading failure
robustness and this impact depends on the amount of power, distributed
by the corresponding node. The impact of a particular node is
reflected by the electrical node significance $\delta$, which is:
%
%% Electrical node significance: Formula 6
\begin{equation}\label{delta}
\delta _{i}=\frac{P_{i}}{\sum_{j=1}^{N} P_{j}},
\end{equation}
where $P_{i}$ stands for total power distributed by node $i$ while, $N$
refers to number of nodes in the network.
Electrical node significance is a centrality measure that can be used
to rank the relative importance (i.e., criticality) of nodes in a power
grid in the context of cascading failures. Failures of nodes with
a higher $\delta$ will typically result in larger cascading failures.

\subsection{Network Robustness Metric}
\label{subsubsec_Network robustness metric}

The network robustness metric $R_{CF}$~(\cite{Koc:373,Koc:862}) is obtained 
by combining the nodal robustness and node significance:
%
%% Robustness metric: Formula 7
\begin{equation}\label{rcf}
R_{CF}=\sum_{i=1}^{N} R_{n,i}\delta _{i}.
\end{equation}
The above metric can be used as a robustness indicator for
power grids. This is done as follows: for a normally operating power
grid the robustness metric is calculated, which results in some value
$v$. This value is used as a base case. During normal operation the
robustness metric value will change somewhat, because different nodes
will demand different electricity quantities over time, leading to
different loading levels in the network. However, a larger change in
the robustness metric, a drop in particular, indicates that a
cascading failure becomes more likely and grid operators may need to
take evasive actions (e.g., adding reserve capacity to the grid or
demand shifting of power). Note that, in the general case, it is
complicated to determine what good safety margins are, or for which
values of the robustness metric the exact tipping point is located
(i.e., the point where a small failure will lead to a massive
blackout). Ultimately this needs to be determined by the grid
operators. We have determined this point experimentally, by
simulation, for a specific power grid: the IEEE 118 Power system (see
Section~\ref{subset:robust-metr-comp} ). We refer to~\cite{Koc:571}
which presents a more general and structured investigation of this topic.

\section{Decentralized Aggregation}
\label{sec:decentr-aggr}

The computation of the robustness metric introduced in the previous section in a centralized manner raises a number of challenges when applied to large areas (i.e., provinces or even whole countries). Scalability, single-point-of-failure, real-time results dissemination, fault tolerance, maintenance of dedicated hardware are just a few examples that hint towards a decentralized approach as a more convenient solution.

The described problem maps onto a geometric random graph (mesh network), where the nodes can communicate mainly with their direct neighbors.  
From the perspective of the communication model, we assume that time is discrete. During one \emph{time step} each node will pick and communicate with a random neighbor. Major updates in the network occur just once in a while (for example, in the described scenario, new measurement data is made available once every 15 minutes). We will make use of the concept of \emph{time rounds} and ask the nodes to update their local data at the beginning of the rounds. The bootstrap problem and round-based time models received a lot of attention in literature~\cite{jelasity2005gossip, bicocchi2010handling, pruteanu2012lossestimate} - in our application scenario the constraints being very loose allow for an algorithm like the one presented in~\cite{werner2005firefly}.

% item - nodes failing, etc.
We make no assumptions with respect to nodes stop-failing or new nodes joining the network. The mechanism described below can accommodate these cases and the computation results will adapt themselves to such changes.

\subsection{Solution Outline}

Our solution for computing the robustness metric uses a primitive for computing sums in a distributed network inspired by the gossip-alike mechanism presented in~\cite{mosk2008fast} (see Figure~\ref{fig:example}). The algorithm presented in~\cite{mosk2008fast} computes a sum of values distributed on the nodes of a network by using a property of order statistics applied to a series of exponential random variables. The algorithm resembles gossiping algorithms~\cite{jelasity2005gossip} but differs in a number of important points. 

Essentially, it trades communication for convergence speed. By relying on the propagation of an extreme value (the minimum value in this case), locally computable, it achieves the fastest possible convergence in a distributed network - $O(D \log N)$ time steps ($D$ is the diameter of the network and $N$ the number of nodes). This speed is significant compared to the original gossiping algorithms that converged in $O(D^2 \log N)$ time steps~\cite{boyd2005gossip}. For example,  in Figure~\ref{fig:example} a $N=1000$ nodes network with diameter $14$ converges after the first $15$ computation steps. 
The paid price is the increased messages size $O(\delta^{-2})$, where $\delta$ is a parameter defining the precision of the final result. Assuming $s$ as the ground-truth result, the algorithm offers an estimate in the interval $[(1-\delta) s, (1+\delta) s]$ with an error $\epsilon = O(1/poly(N))$.

\begin{figure}
  \centering
  \includegraphics[width=0.95\columnwidth]{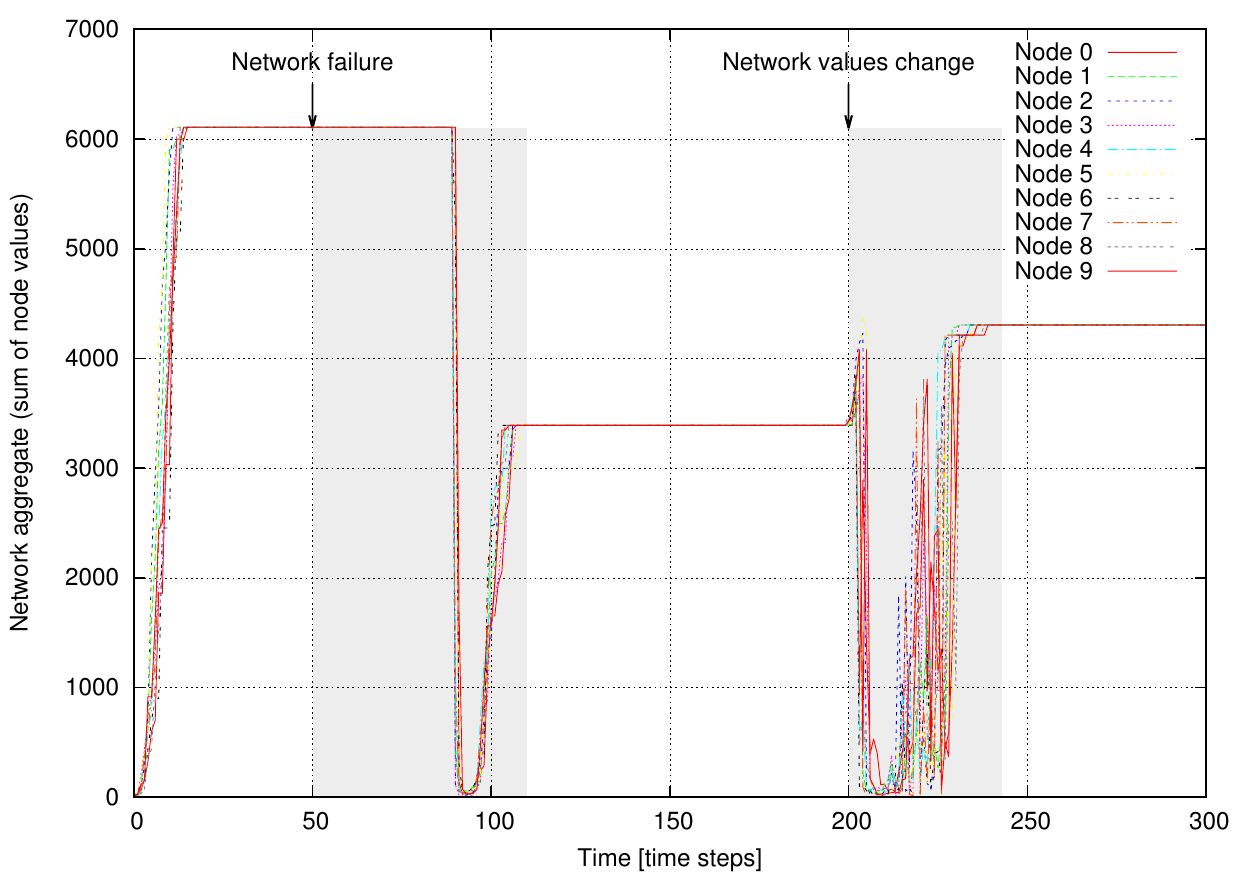}
  \caption{Sum computation during network dynamics (geometric random graph with $1000$ nodes initially, diameter 14, random values, half of the network is disconnected at time 50, nodes change their values at at time 200).}
  \label{fig:example}
\end{figure}

We extend the extreme value propagation mechanisms to account for dynamics in the network. Specifically, we add a \emph{time-to-live field} to each value - an integer value that decreases with time and marks the age of the current value. This mechanism takes care of nodes leaving the network, stop-crashing or resetting. In the example in Figure~\ref{fig:example}, after convergence, we removed half of the nodes in the network at time $50$. The effect of expiring time-to-live (set to a maximum of $50$ in this example) can be seen around the time step $100$.
Furthermore, we extend the time-to-live expiry mechanism to achieve a $O(D \log N + \log T)$ time steps \emph{value removal}. In other words, if a certain extreme value propagated through the network, we mark it as ``expired'' and assure its associated time-to-live value to expire (reach 0) within $O(D \log N + \log T)$ time steps. This is shown in Figure~\ref{fig:example} in the interval $200-300$. At time $200$ half of the nodes in the network changed their values randomly triggering the expiration mechanism.

Our distributed approach solves most of the scaling issues and proves to be highly robust against network dynamics (e.g., network nodes becoming unavailable due to failures, reconfiguration, new nodes joining the system, etc.). As we show in the following, our approach is very fast for a typical network, outperforming by far the speed of centralized approaches. As the protocols rely on anonymous data exchanges, privacy issues~\cite{IEEEPS09} are alleviated, as the identities of the system participants are not needed in the computations.

The downsides of our approach map onto the known properties of this class of epidemic algorithms. Although anonymity is preserved, an authentication system~\cite{jesi2007identifying} is needed to prevent malicious data corrupting the computations. Also, a light form of synchronization~\cite{werner2005firefly} is needed for coordinating nodes to report major changes in their local values - fortunately, the nature of the problem we address here allows it.

\subsection{Self-stabilizing Sum Computation - $ComputeSum()$}

The basic mechanism behind the sum computation algorithm presented below relies on minimum value propagation via gossiping. Assume that each node holds a positive value $x_i$. At each time step, each node chooses a random neighbor and they exchange their values, both keeping the smallest value. The smallest value propagates fast in the network, in $O(D \log N)$ time steps, via this push-pull gossiping mechanism (see \cite{shah2009gossip} Section 3.2.2.4 page 32).

\IncMargin{1em}
\begin{algorithm}[t!]
\DontPrintSemicolon
\SetAlgoNoEnd
\SetNoFillComment
    \BlankLine
    \tcc*[l]{\emph{$v$, $\tau$ - received value and time-to-live}}
        \tcc*[l]{\emph{$v_{local}$, $\tau_{local}$ - local value and time-to-live}}
    \BlankLine
    \tcc*[l]{\emph{create temporary variables}}
    $(v_m, v_M) \leftarrow \left( \min(v, v_{local}), \max(v, v_{local}) \right)$ \label{a1:l0} \;
    $(\tau_m, \tau_M) \leftarrow$ corresponding $(\tau, \tau_{local})$ to $(v_m, v_M)$ \;
    \BlankLine    
    \tcc*[l]{\emph{update logic}}
    \If{$v_m==v_M$}{
      \If(\tcc*[f]{\emph{equal negative values}}){$v_m < 0$}{
        $\tau_m \leftarrow \mathcal{C} \tau_m$ \label{a1:l3}}
      \Else(\tcc*[f]{\emph{equal positive values}}){
        $\min(\tau_m, \tau_M) \leftarrow \max(\tau_m, \tau_M) - 1$ \label{a1:l2}
      }
    }
    \Else{
      \If(\tcc*[f]{\emph{at least one negative value}}) {$v_m < 0$}{
        \If {$v_m==-v_M$}{
          $(\tau_m, \tau_M) \leftarrow (T, T)$}
        \Else {
          $(\tau_m, \tau_M) \leftarrow (\mathcal{C} \tau_m, \mathcal{C} \tau_M)$ \label{a1:l4}}      
      }
      \Else(\tcc*[f]{\emph{two different positive values}}) {
        $\tau_M \leftarrow \tau_m - 1$
      }
    }
    \BlankLine    
    \tcc*[l]{\emph{update local variables}} 
    $(v, v_{local}) \leftarrow$ $(v_m, v_m)$ \label{a1:l1} \;
    $(\tau, \tau_{local}) \leftarrow$ corresponding $(\tau_m, \tau_M)$ \;
\caption{PropagateMinVal($v, \tau$)}
\label{alg:ttl}
\end{algorithm}
\DecMargin{1em}

Assume that each node $i$ in the network holds a positive value $x_i$. In order to compute the sum of all $n$ values in the network $\left(\sum_{i=1}^N x_i\right)$, the authors of~\cite{mosk2008fast} propose that each node holds a vector $\mathbf{v}$ of $m$ values, initially drawn from a random exponential random distribution with parameter $\lambda_i=x_i$. After a gossiping step between two nodes $i$ and $j$, the vectors $\mathbf{v}_i$ and $\mathbf{v}_j$ become equal and hold the minimum value on each position of the initial vectors. Thus, given an index $k \in (1,m)$, the resulting vectors $\mathbf{v}'_i$, $\mathbf{v}'_j$ will have the property $\mathbf{v}'_i[k] = \mathbf{v}'_j[k] = \min\left(\mathbf{v}_i[k], \mathbf{v}_j[k]\right)$. The authors show that, after all vectors converge to some value $\mathbf{v}$, the sum of $x_i$ values in the network may be approximated by: $\sum_{i=1}^N x_i = \frac{m}{\sum_{k=1}^m \mathbf{v}[k]}$ (see~\cite{shah2009gossip} Section 5.2.5.4 page 75).

We extend the algorithm presented in~\cite{mosk2008fast} by adding to each node a new vector $\boldsymbol{\tau}_i$ holding a time-to-live counter for each value. This new vector is initialized with a default value $T$, larger than the convergence time of the original algorithm (choosing a proper value is explained below). The values in $\boldsymbol{\tau}_i$ decrease with $1$ every time slot, with one exception. The node generating the minimum $\mathbf{v}_i[k]$ on the position $k \in (1, m)$ sets $\boldsymbol{\tau}_i[k]$ to $T$ (see Algorithm~\ref{alg:shaplus} line~\ref{a2:l0}). In the absence of any other dynamics, all properties proved in~\cite{shah2009gossip} remain unchanged as the output of our approach is identical to the original algorithm.

The main reason for adding the time-to-live field is to account for nodes leaving the network or nodes that fail-stop. We avoid this way complicated mechanisms in which nodes need to keep track of neighbors. Additionally, this mechanism does not make use of node identifiers. 
The intuition behind this mechanism is that a node generating the network-wide minimum on position $k \in(1,m)$ will always advertise it with the accompanying time-to-live set to the maximum $T$. The rest of the nodes will adopt the value $\mathbf{v}[k]$ and have a value $\boldsymbol{\tau}[k]$ decreasing with the distance from the original node. $T$ is chosen to be larger than the maximum number of gossiping steps it takes the minimum to reach any node in the network. In a gossiping step between two nodes $i$ and $j$, if $\mathbf{v}_i[k]=\mathbf{v}_j[k]$ then the largest of the $\boldsymbol{\tau}_i[k]$ and $\boldsymbol{\tau}_j[k]$ will propagate (Algorithm~\ref{alg:ttl} line~\ref{a1:l2}). This means that $\boldsymbol{\tau}[k]$ on all nodes will be strictly positive for as long as the node is online. If the node that generated the minimum value on the position $k$ goes offline, all the associated $\boldsymbol{\tau}[k]$ values in the network will steadily decrease (Algorithm~\ref{alg:shaplus} line~\ref{a2:l1}) until they will reach $0$ and the minimum will be replaced by next smallest value in the network (Algorithm~\ref{alg:shaplus} lines~\ref{a2:l2}-\ref{a2:l3}). It will take $T$ time steps for the network to ``forget'' the value on position $k$. The graphical effect of this $O(T)$ mechanism is shown in Figure~\ref{fig:example} in the interval $50-150$.

\IncMargin{1em}
\begin{algorithm}[t!]
\DontPrintSemicolon
\SetAlgoNoEnd
\SetNoFillComment
  \BlankLine
  \tcc*[l]{\emph{$\mathbf{v^0}$ - original random samples vector on this node}}  
  \tcc*[l]{\emph{$\mathbf{v}, \boldsymbol{\tau}$ - received value and time-to-live vectors}}
  \BlankLine
  \tcc*[l]{\emph{update all elements in the data vector}}
  \For{$j=1$ to length$(\mathbf{v})$}{ 
    PropagateMinVal$(\mathbf{v}[j], \boldsymbol{\tau}[j])$
  }
  \BlankLine
  \tcc*[l]{\emph{time-to-live update - do once every timeslot}}
  \For{$j=1$ to length$(\mathbf{v})$}{    
    \If(\tcc*[f]{\emph{reinforce a minimum}}){$\mathbf{v}[j]==\mathbf{v^0}[j]$} {
      $\boldsymbol{\tau}[j] \leftarrow T$ \label{a2:l0}
    }
    \Else{ 
      $\boldsymbol{\tau}[j] \leftarrow \boldsymbol{\tau}[j] - 1$ \label{a2:l1}
      \tcc*[r]{\emph{decrease time-to-live}}
      \If(\tcc*[f]{\emph{value expired}}){$\boldsymbol{\tau}[j]<=0$}{ \label{a2:l2}
        $\mathbf{v}[j] \leftarrow \mathbf{v^0}[j]$ \;     
        $\boldsymbol{\tau}[j] \leftarrow T$ \label{a2:l3}
      }
    }
  }
  \BlankLine
  \tcc*[l]{\emph{estimate the sum of elements}} \label{a2:l5}
  $s \leftarrow 0$ \; 
  \For{$j=1$ to length$(\mathbf{v})$}{
    $s \leftarrow s + abs(\mathbf{v}[j])$ \label{a2:l4}
  }
  \BlankLine
  \Return{length$(\mathbf{v}) / s$} \label{a2:l6}
\caption{ComputeSum $(\mathbf{v}, \boldsymbol{\tau})$}
\label{alg:shaplus}
\end{algorithm}
\DecMargin{1em}

The second self-stabilizing mechanism targets nodes changing their values at runtime. Assume a node changes its value $x_i$ to $x'_i$ at some time $t$. This change will trigger a regeneration of its original samples from the exponential random variable $\mathbf{v}_i$ to $\mathbf{v}'_i$. Let $k$ be an index with $k \in (1, m)$. Let $\mathbf{u}$ be the vector containing the minimum values in the network if the node $i$ would not exist. In order to understand the change happening when transitioning from $x_i$ to $x'_i$ we need to look at the relationship between the individual values $\mathbf{v}_i[k]$, $\mathbf{v}'_i[k]$ and $\mathbf{u}[k]$. As shown in Table~\ref{tab:vals}, if $\mathbf{u}[k]$ is the smallest of all three values then no change will propagate in the network. If $\mathbf{v}'_i[k]$ is the smallest value, then this will propagate fast, in $O(D \log N)$ time steps, with the basic extreme propagation mechanism. If $\mathbf{v}'_i[k]$ is the smallest then this value will remain in the network until its associated time-to-live field will expire. As usually $T \gg D$ we add a mechanism to speed up the removal of this value from the network.

The removal mechanism is triggered by the node owning the value that needs to be removed (in our case node $i$) and works as follows: node $i$ will mark the value $\mathbf{v}_i[k]$ as ``expired'' by propagating a negative value $-\mathbf{v}_i[k]$. This change will not affect the extreme value propagation mechanism (see Algorithm~\ref{alg:ttl}) nor the estimation of the sum (notice the use of the absolute value function in Algorithm~\ref{alg:shaplus} line~\ref{a2:l4}). If node $i$ contacts a node also holding the value $\mathbf{v}_i[k]$ then first, it will propagate the negative sign for the value, also maximizing its time-to-live field to a large value $T$. Intuitively, as long as the $\mathbf{v}_i[k]$ is present in the network, the $-\mathbf{v}_i[k]$ will propagate, over-writing it. Considering the large range of unique float or double numbers versus the number of values in a network at a given time, we can safely assume the values in the network to be unique.

The time-to-live field of any negative value will halve with each gossiping step (for $\mathcal{C}=2$) if it does not meet the $\mathbf{v}_i[k]$ value (Algorithm~\ref{alg:ttl} lines~\ref{a1:l3},~\ref{a1:l4}). Intuitively, if a negative value is surrounded by values other than $\mathbf{v}_i[k]$, it will propagate while canceling itself at the same time with an exponential rate. This mechanism resembles somewhat a predator-prey model~\cite{Arditi1989311}, where prey is represented by the $\mathbf{v}_i[k]$ variable and predators by $-\mathbf{v}_i[k]$. We designed it such that the populations cancel each-others, targeting the fixed point at the origin as the solution for the accompanying Lotka-Volterra equations.

\begin{table}[t!]
\centering
\small
\begin{tabular}{ |c|c|c|c|c| }
\hline
\rowcolor{lightgray} Propagation & Ordering & Previous & Intermediate & Final \\
\hline
\multirow{2}{1.5em}{none} & $ \mathbf{u}[k]  < \mathbf{v}_i[k]  < \mathbf{v}'_i[k] $ & $\mathbf{u}[k]$ & $\mathbf{u}[k]$  & $\mathbf{u}[k]$  \\
                          & $ \mathbf{u}[k]  < \mathbf{v}'_i[k] < \mathbf{v}_i[k]  $ & $\mathbf{u}[k]$ & $\mathbf{u}[k]$  & $\mathbf{u}[k]$  \\
\hline
\multirow{2}{1.5em}{slow} & $ \mathbf{v}_i[k]  < \mathbf{u}[k]  < \mathbf{v}'_i[k] $ & $\mathbf{v}_i[k]$ & $\mathbf{v}_i[k]$  & $\mathbf{u}[k]$  \\
                          & $ \mathbf{v}_i[k]  < \mathbf{v}'_i[k] < \mathbf{u}[k]  $ & $\mathbf{v}_i[k]$ & $\mathbf{v}_i[k]$  & $\mathbf{v}'_i[k]$ \\
\hline
\multirow{2}{1.5em}{fast} & $ \mathbf{v}'_i[k] < \mathbf{u}[k]  < \mathbf{v}_i[k]  $ & $\mathbf{u}[k]$ & $\mathbf{v}'_i[k]$ & $\mathbf{v}'_i[k]$ \\
                          & $ \mathbf{v}'_i[k] < \mathbf{v}_i[k]  < \mathbf{u}[k]  $ & $\mathbf{v}_i[k]$ & $\mathbf{v}'_i[k]$ & $\mathbf{v}'_i[k]$ \\
\hline
\end{tabular}
\\ $\,$
\caption{Value propagation.}
\label{tab:vals}
\end{table}

\begin{lemma}
\label{lemma:removal}
  \emph{Value removal delay}

  \parindent = 0pt
  \emph{By using the value removal algorithm, the new minimum propagates in the network in $O(D \log N +\log T)$ time steps.}
\end{lemma}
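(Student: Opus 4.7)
The plan is to decompose the removal process into three phases and bound each one separately, then sum. Let $k$ be the index whose value is being removed, and write $-\mathbf{v}_i[k]$ for the negative ``anti-value'' injected by node $i$ when its local sample changes.

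\emph{Phase 1 (negative front propagation).} I would first argue that $-\mathbf{v}_i[k]$ propagates through the network like any other new network-wide minimum. Since every negative number beats every positive number in the $\min$ comparison of Algorithm~\ref{alg:ttl} (line~\ref{a1:l0}), the baseline push-pull extreme-value analysis of~\cite{shah2009gossip}, Section 3.2.2.4, applies verbatim: after $t_1 = O(D\log N)$ steps every node holds $\mathbf{v}[k] = -\mathbf{v}_i[k]$. In particular, no positive copy of $\mathbf{v}_i[k]$ remains in the network.

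\emph{Phase 2 (geometric TTL decay).} From $t_1$ onwards, none of the TTL-reset rules can fire: the ``$v_m = -v_M$'' matching rule in Algorithm~\ref{alg:ttl} requires a positive copy of $\mathbf{v}_i[k]$, and the matching-positive reinforcement in line~\ref{a2:l0} of Algorithm~\ref{alg:shaplus} is similarly unreachable. Every gossiping step involving the negative value therefore falls into the ``multiply by $\mathcal{C}^{-1}$'' branches (lines~\ref{a1:l3} and~\ref{a1:l4}). Let $\tau_{\max}(t)$ denote the largest TTL anywhere in the network at time $t$. Since every node participates in at least one gossip per round and every such interaction contracts that node's TTL by a factor of two, we have $\tau_{\max}(t+1) \leq \tau_{\max}(t)/2$; starting from $\tau_{\max}(t_1) \leq T$, after $O(\log T)$ additional rounds every TTL drops below $1$, the expiration branch of Algorithm~\ref{alg:shaplus} (lines~\ref{a2:l2}--\ref{a2:l3}) fires everywhere, and each node reverts to its original sample.

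\emph{Phase 3 (fresh minimum propagation).} At this point the network state is exactly the initial state of the baseline algorithm applied to the updated value profile (with node $i$'s contribution now coming from $\mathbf{v}'_i$). A second invocation of the same extreme-value argument as in Phase 1 delivers the true new minimum to every node in another $O(D\log N)$ rounds. Summing the three contributions yields $O(D\log N) + O(\log T) + O(D\log N) = O(D\log N + \log T)$, as claimed. The main obstacle I anticipate is the Phase 2 book-keeping: the phases are not cleanly sequential, since as long as any positive $\mathbf{v}_i[k]$ persists near the retreating ``front,'' some negative TTL keeps being re-maximised back to $T$. The clean workaround is to define $t_1$ as the first time at which $\mathbf{v}_i[k]$ is extinct (bounded by Phase 1) and to start the geometric-decay clock only from that point, so that the contraction $\tau_{\max}(t+1) \leq \tau_{\max}(t)/2$ holds uniformly for $t \geq t_1$; a secondary subtlety, concurrent removals on different indices, is automatically handled because Algorithms~\ref{alg:shaplus} and~\ref{alg:ttl} act component-wise on the coordinates of $\mathbf{v}$.
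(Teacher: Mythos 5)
Your proposal is correct and follows essentially the same route as the paper's own proof: the negative ``anti-value'' spreads as the new network-wide extreme in $O(D\log N)$ steps, after which the time-to-live contracts geometrically to zero in $O(\log T)$ steps, and the worst case is obtained by treating these stages sequentially (the paper likewise notes that in reality they overlap, so the bound is conservative). Your explicit third phase for the fresh minimum and the remark about component-wise independence are minor elaborations that do not change the argument or the bound.
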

\begin{proof}
In the worst case scenario, the whole network contains the minimum value $\mathbf{v}_i[k]$ on position $k$, with the time-to-live field setup at maximum $T$.The negative value, being the smallest one in the network, propagates in $O(D \log N)$ in the whole network. Again, in the worst case scenario, we will have a network with each node having the value $-\mathbf{v}_i[k]$ on position $k$ with the time-to-live set to the maximum $T$. From this moment on, the time-to-live will halve at each gossip step on each node (for $\mathcal{C}=2$), reaching $0$, in the worst case scenario in $O(\log T)$ time steps. This is the worst case because nodes may be contacted by several neighbors during a time step leading to a much faster cancellation. Overall, the removal mechanism will be active for at most $O(D \log N +\log T)$ time steps. 
This bound is an upper bound. In reality the spread and cancellation mechanisms will act in parallel, leading to tighter bounds. 
\end{proof}

This result gives us the basis for choosing the $T$ constant. Ideally, it should be chosen as small as possible, in line with the diameter of the network. The fact that the removal mechanism is affected only by $\log T$ lets us use an overestimate of $T$, which can be a few orders of magnitude larger than the diameter of the network, with little impact on the convergence speed. For example, if the network diameter is between $10-30$ and the values refresh each 10000 time steps, we can safely set $T$ anywhere between $1000-10000$ (see Section~\ref{subset:scalability}). This will not affect the convergence of the sum computation mechanism but allow for a timely account for a node removal.

All the mechanisms presented in this section lead to the sum computation mechanism $ComputeSum()$ presented in Algorithm~\ref{alg:shaplus}. It holds the properties of the original algorithm described in~\cite{mosk2008fast} and it additionally showcases self-stabilization properties to account for network dynamics in the form of node removal and nodes changing their values in batches.

\begin{figure}
  \centering
  \includegraphics[width=0.95\columnwidth]{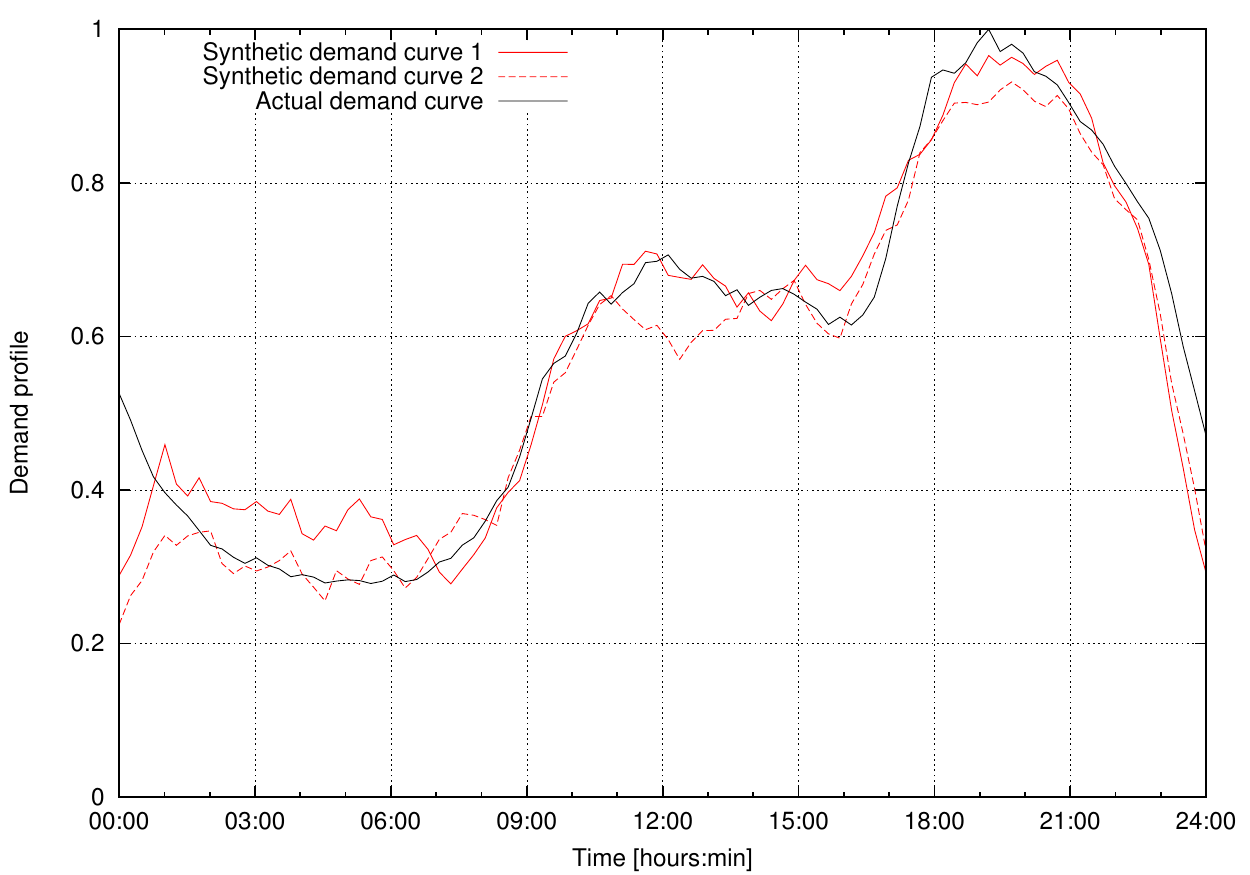}
  \caption{The actual demand profile from a point in Dutch transmission grid and two synthetically generated demand profiles.}
  \label{fig:loadingProfileDutchGrid}
\end{figure}

\subsection{Robustness Metric Computation}

The robustness metric (see Section~\ref{sec:robustn-metr-monit}) is made up of two terms that can be computed locally ($p_i$ in Equation~\eqref{eq:normflow} and $R_{n,i}$ in Equation~\eqref{eq:rn}) and two that can be computed in a distributed fashion ($\delta_i$ in Equation~\eqref{delta} and $R_{CF}$ in Equation~\eqref{rcf}). Equation~\eqref{rcf} can be rewritten as:
\begin{align}
  R_{CF} &= \frac{\sum_{i=1}^N R_{n,i}P_i}{\sum_{j=1}^N P_j},
\end{align}
leading to a solution with two $ComputeSum()$ algorithms in parallel. The first algorithm will compute $\sum_{i=1}^N R_{n,i}P_i$, while the second one will compute $\sum_{j=1}^N P_j$.

Characterizing the convergence time of a composition of distributed algorithms is a difficult task in general. Fortunately, in our case, the composition of the two $ComputeSum()$ has the convergence time equal to each of the two mechanisms, leading to the same $O(D \log N + \log T)$ time steps complexity. Assume the network is stabilized - once the power distributions $P_i$ change both the values $\sum_{i=1}^N R_{n,i}P_i$  and $\sum_{j=1}^N P_j$ will stabilize in $O(D \log N + \log T)$ \emph{in parallel}, as they do not require intermediate results from each other. 

As the type of gossiping algorithms we use are based on minimum value propagation, all the nodes in the network will have the same value once the algorithm converged. Stabilization can be easily detected locally by monitoring the lack of changes in the propagated values for a fixed time threshold.

\section{Analysis and Discussion}
\label{sec:analysis-discussion}

Our approach of computing the robustness metric is scalable and robust. In this section we will focus on some of quantitative aspects, analyzing results obtained from simulations based on synthetic and real data. The computer code implements the approach described above and was implemented in Matlab and C++. In all simulations, the nodes have been deployed in a square area. Their communication range was varied to obtain the desired value for the diameter of the network. Networks made up of several independent clusters were discarded.

\begin{figure}
  \centering
  \includegraphics[width=0.95\columnwidth]{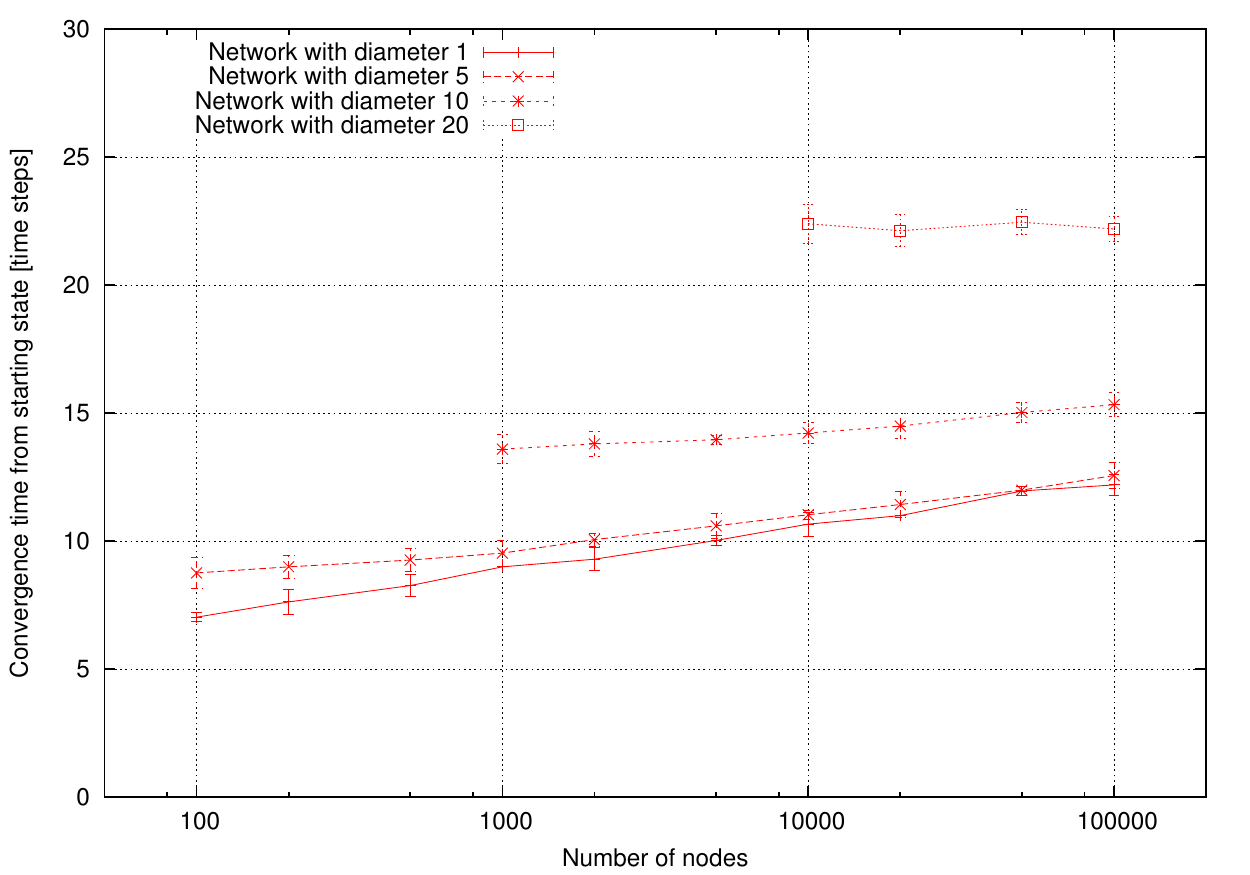}
  \caption{Convergence of network starting from a clean state (geometric random graph, nodes initialized with random values).}
  \label{fig:convsimple}
\end{figure}

\subsection{Data Generation}
\label{sec:experiments}

As far as the authors are aware there is no data available in the
public domain that describes both the structure and the change in load
over some time period for a power grid. To show the effectiveness of
our approach we have generated this data ourselves, below we explain
how this is done and we show the effectiveness of the proposed
distributed algorithm for calculating robustness of an operational
grid.

The computation of the system robustness of a power grid requires data
describing its topology (i.e., interconnection of nodes with lines),
the electrical properties of its components (i.e., admittance values of
the transmission lines), information about the nodes (i.e., number and
their types), and finally their generation and load values. The IEEE
power test systems~\cite{TestCaseRef} provide all of these data, the
IEEE 118 power system provides a realistic representation of a real
world power transmission grid consisting of 118 nodes and 141
transmission lines. We use this as a reference power grid.

The IEEE 118 power system gives information about the topology of the
power grid. The loading profile provided with the grid
topology~\cite{TestCaseRef} gives a representative load for the
network, but only for one moment in time. However, in practice, the
topology of a power grid remains generally unchanged over time (except
for the maintenance, failure and extension of the grid) while the
generation/loading profile varies over time. This changing nature of
the loading profile (and accordingly the generation profile) results
in a varying robustness of the system over time. Therefore simulating
the robustness profile of a power grid for a whole day requires a
demand profile belonging to the whole day.

To obtain a varying robustness for the IEEE 118 power system, we
randomly choose 10\% of the power generation nodes of the power system
which are then fed with synthetic (generated) demand
profiles. The demand values of other power generation nodes remain
unchanged. The demand profiles are generated based on an actual load
profile for a day of the Dutch grid on January 29, 2006. The demand at
the corresponding point in the Dutch grid is sampled per 15 minutes
during the whole day. Figure~\ref{fig:loadingProfileDutchGrid} shows the
demand profile. Based on this actual demand profile, additional
synthetic demand profiles are generated by (i) first introducing
random noise to the actual demand profile, and
(ii) then by smoothing the curve out with a moving
average~\cite{MovingAverages1962} with a window size of
10. Figure~\ref{fig:loadingProfileDutchGrid} illustrates the actual
demand profile and two other synthetically generated demand curves.

\begin{figure}
  \centering
  \includegraphics[width=0.95\columnwidth]{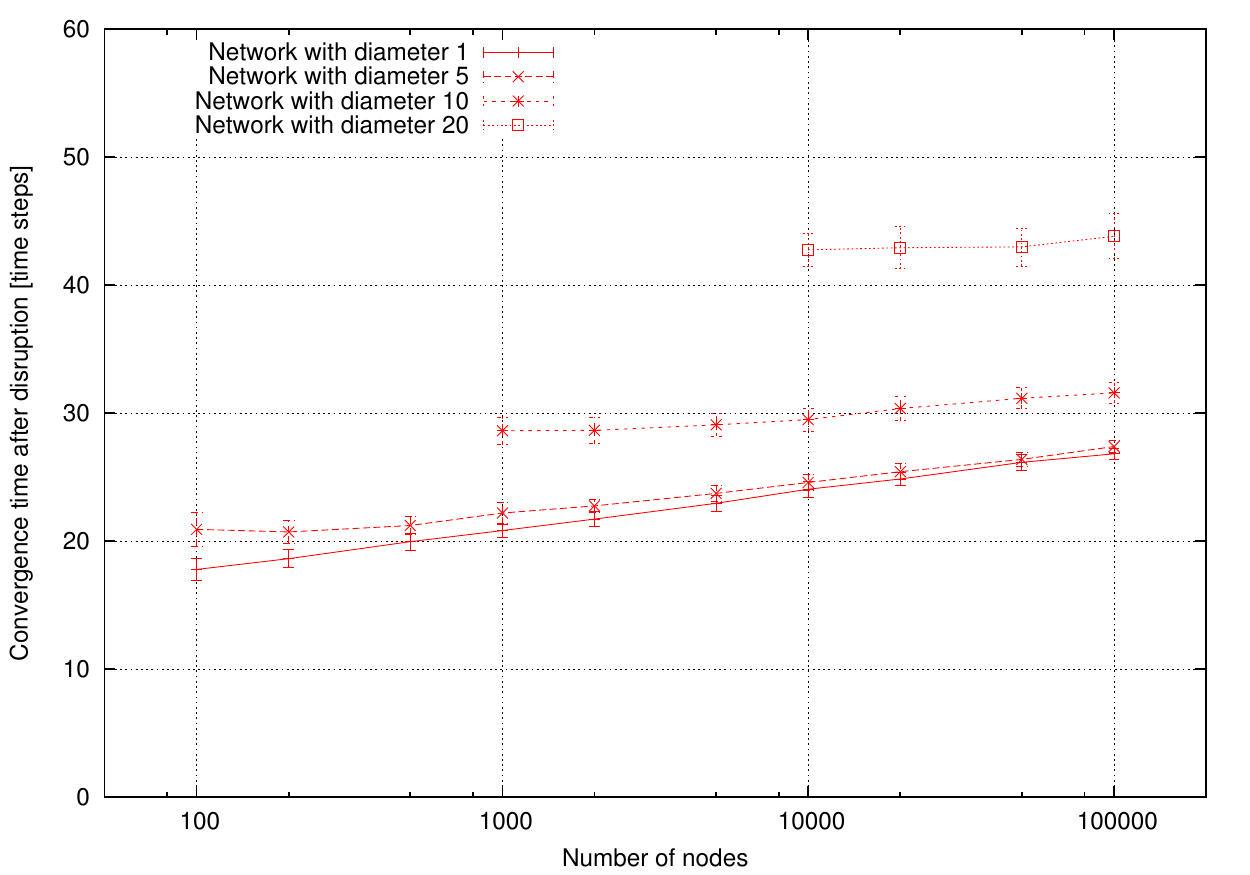}
  \caption{Convergence of network after a disruption (geometric random graph, half of the nodes change their values after initial network convergence).}
  \label{fig:convttl}
\end{figure}

\subsection{Influence of Communication Topology}

The underlying communication network for a smart grid can be implemented in a number of ways, mapping to different communication topologies. For example, one might choose to use the internet backbone, allowing any-to-any communication in the network, leading to a fully connected graph. In the first experiment, we have initialized the network with a set of random variables and recorded the time when the aggregated sum converges to the same value on all nodes. As seen in Figure~\ref{fig:convsimple}, fully connected networks lead to the fastest aggregate computation. In a second experiment, once the network stabilized, we introduced a change in the form of half of the nodes in the network changing their value to a different one. Again, we recorded the time until the network stabilized after this change. As expected, Figure~\ref{fig:convttl} shows that fully connected networks stabilize the fastest after a disruption. 

These results assume the internet backbone to work perfectly and able to route the high level of traffic generated. A more realistic scenario is considering that the various data collection points obtain data from the individual consumers via some radio technology (for example GPRS modems) and are themselves connected to the internet backbone. To keep the traffic in the network to a minimum, the data collection points only communicate with their network-wise first order neighbors, leading to a mesh network deployment type. As seen in Figure~\ref{fig:convsimple} and Figure~\ref{fig:convttl}, the diameter of the network clearly has the major impact factor on the results, confirming the theoretical convergence results. The information needs at least $O(D)$ time steps to propagate through the network. The constant in the $O()$ notation is influenced on one hand by the average connectivity in the network (a node can only contact a single neighbor per time step, slowing information dissemination) and the push-pull communication model on the other (a node may be contacted by several neighbors during a time step, speeding up information dissemination).

\begin{figure}
  \centering
  \includegraphics[width=0.95\columnwidth]{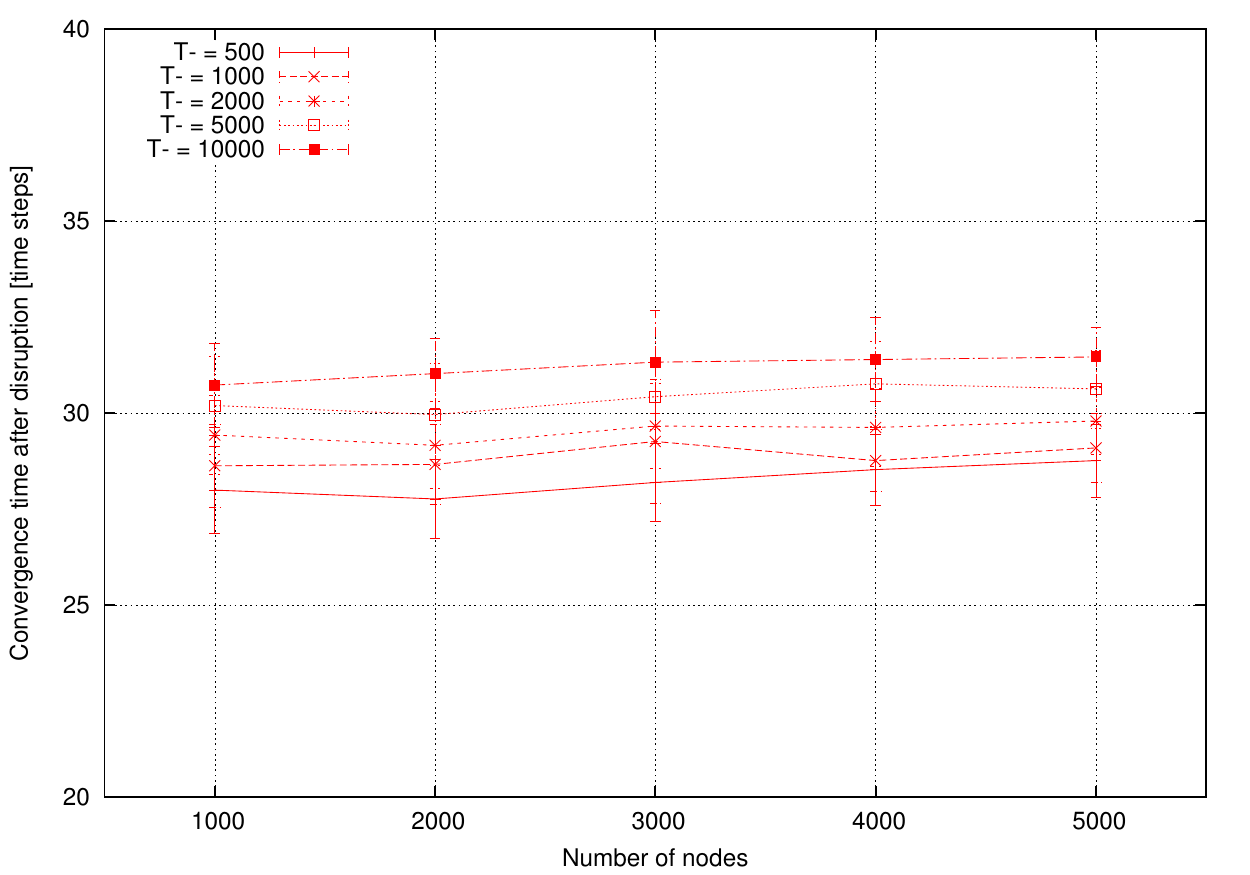}
  \caption{Influence of $T$ parameter (random geometric graph, 10-hop network, half of the nodes change their values randomly after initial network convergence).}
  \label{fig:ttl}
\end{figure}

\subsection{Scalability Aspects}
\label{subset:scalability}

One of the main characteristics of our approach is that the algorithm we propose scales very well with the number of nodes in the network. As seen again Figure~\ref{fig:convsimple} and Figure~\ref{fig:convttl}, the number of nodes has little influence in the final results (influencing only as $O(\log N)$). The simulation explored a space in which we varied the number of nodes over four orders of magnitude and the results hint that tighter boundaries might exist then the ones we proposed in this paper. We noticed that for a fully connected network, the recovery time varies with $34\%$ between a network with 1000 nodes and one with 100000 nodes, while the variation drops to a mere $2.4\%$ for a 20-hop network varying from 1000 nodes to 100000 nodes.

These results are very important for the smart grid application type. As the network will be linked to a physical space (a country or in general, a region), fully covering it, the diameter of the network is expected to, at most, decrease with the addition of new nodes. Intuitively, when thinking of nodes as devices with a fixed transmission range, adding more devices in the same region may lead to shorter paths between various points. The aggregate computation approach we propose shows on one hand an almost invariance to the increase in the number of nodes in the network and a linear variation with the diameter. These properties are essential for any solution that needs to take into account that the number of participants in the grid will most likely increase over time.

We are also interested in understanding the effects the time-to-live of the negative fields has on the convergence and scalability properties. We have considered a 10-hop network with 1000 to 5000 nodes and varied the time-to-live for negative values between 500 and 10000. Figure~\ref{fig:ttl} confirms Lemma~\ref{lemma:removal} with respect to the $\log T$ term. As the data shows, the convergence time was affected very little by the chosen parameters. As expected, the diameter of the network has the larger influence in this mechanism.

\subsection{Robustness Metric Computation}
\label{subset:robust-metr-comp}

\begin{figure}
  \centering
  \includegraphics[width=0.95\columnwidth]{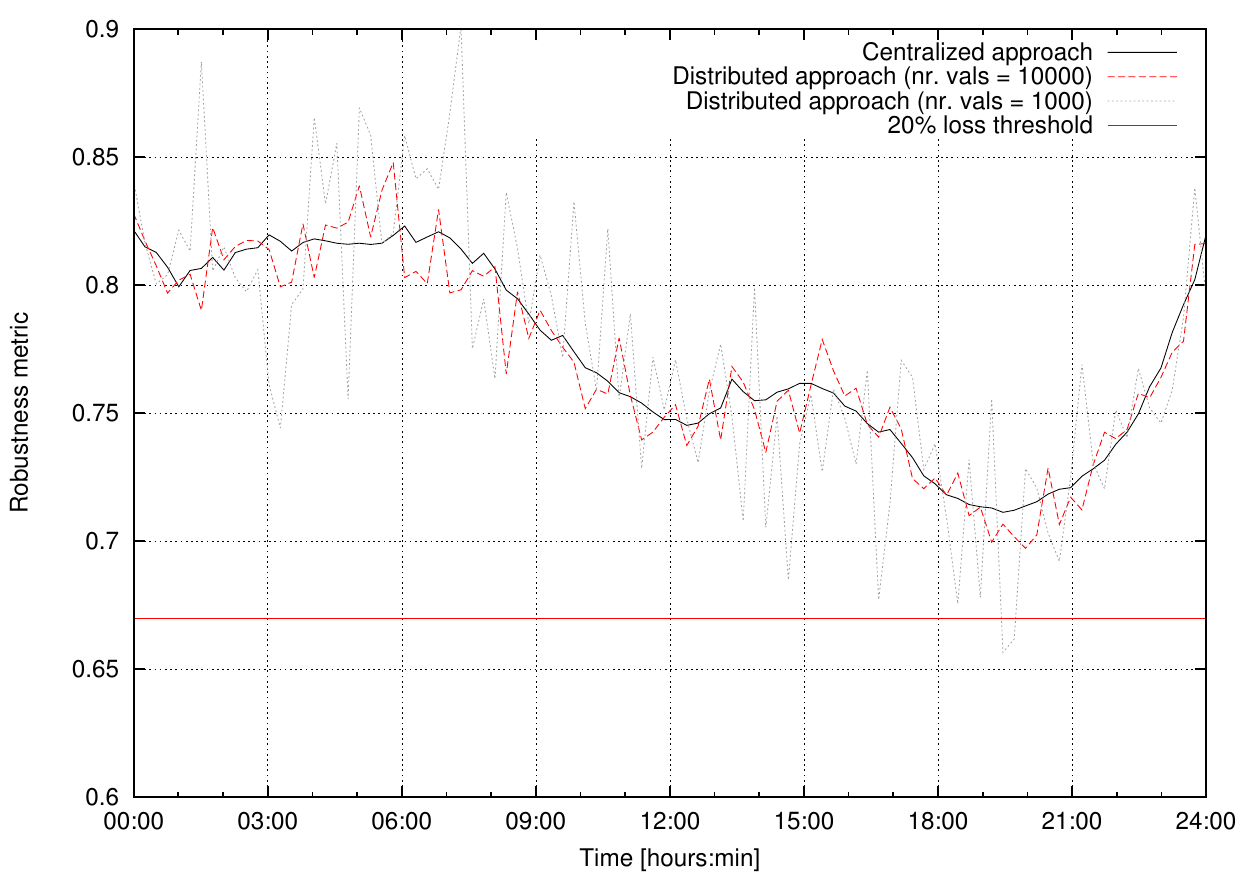}
  \caption{Robustness metric (centrally computed values versus two runs of the distributed algorithm, synthetic data - see Section~\ref{sec:experiments}, each point represents network data after convergence; the line $R_{CF}=0.67$ illustrates a critical threshold below which line failures can lead to large blackouts in this particular network).}
  \label{fig:robust}
\end{figure}

Figure~\ref{fig:robust} shows the distributed computation method performing with real data sets, obtained through the method described in Section~\ref{sec:experiments}. We plotted the results of two simulation runs versus the ground truth data, obtained via centralized computation. The length of the value vector was varied from 1000 values to 10000 values, the results confirming that precision can be set to the desired value, independent of the network topology and size. 
When using a vector of 1000 elements, we obtained a mean relative error of $3\%$ (maximum relative error $11\%$ with a standard deviation of $2.6\%$). Using a larger vector (10000 elements) we were able to obtain a mean relative error of $1\%$ (maximum relative error of $4\%$ with a standard deviation of $0.8\%$). These figures are very good, taking into account that they result from a combination of distributed computations with all the fault tolerant mechanisms enabled.

The figure also includes a line (with robustness value 0.67) that
illustrates the critical threshold, set by the grid operator. If the
robustness metric drops below this value then a power line failure can
lead to a blackout that effects more then 20\% of the power grid. This
threshold value was obtained by running cascading failure simulations
on the IEEE 118 power grid system using targeted attacks (i.e., we
considered a worst case scenario). We refer to~\cite{Koc:571} for a
structured methodology for determining such thresholds. 

The critical threshold chosen above, that effects more 20\% of the
power grid, is more or less arbitrary and mainly chosen for
illustration purposes. In practice various other factors have to be
taken into account by grid operators (line capacities, maintenance
cycles) to determine realistic threshold values, but this should
illustrate the feasibility of the approach as it clearly shows that
the error rate of the distributed algorithm is much smaller then
minimal required drop in robustness value that is needed to meet the
threshold.

Besides the quantitative values shown in Figure~\ref{fig:robust} we
would like to point that our approach is different from traditional
approaches that try to capture the global state of the network and
then take decisions centrally (see
Section~\ref{sec:syst-health-monit}). Our approach pushes the
computation of the robustness metric \emph{in the network}, its
results being available at each node as soon as the computations
converge. This mechanism can be easily used as a measurement phase,
leading to the possibility of implementing distributed control loops
on top of it.

\section{Monitoring and Cascading Failures in Power Grids}
\label{sec:syst-health-monit}

Three types of related work on monitoring the state of a power grid
can be distinguished: (i) metrics that aim to quantify the
vulnerability of the power grid against cascading failures, (ii)
simulation models that aim to predict the impact of node/line outages and
(iii) sensor networks that aim to capture the operative state of the
power grid. 

There exists a significant body of work on defining metrics that
assess the vulnerability of the power grid against cascading
failures. Most studies deploy a purely topological or an extended
topological approach mainly relying on graph theoretical measures such
as betweenness centrality~\cite{Mieghem2006}. However, these
studies~\cite{Kim2007, Chen2006, Bompard2009, Bompard10} only focus on
the topological properties of power grids and fail to take the
operative state of the network into account. In effect this means that
such metrics cannot be used to assess the change in vulnerability of
operational power grids. In addition to these topological approaches,
others~\cite{Youssef2011, Bao2009} propose measures relying on
simulation models. Although, these metrics incorporate also the
operative state of a power network, it is very challenging to deploy
them to quantify the system's resilience against cascading failures in
(near) real-time because their computation requires full knowledge of
the power grid state in order to simulate cascades. Our earlier
work~\cite{Koc:862,Koc:373} (also see
Section~\ref{sec:robustn-metr-monit}) forms a noticeable exception to
this, since it defines a metric that considers both the topological
and the operative state of a power grid, while not requiring any
computationally expensive tasks (e.g., computing the full network
state in order to simulate cascades in the network).

Grid operators traditionally assess the network operation by relying
on flow based simulation models (i.e., N-x contingency
analysis~\cite{Baldick2009}). These models take the operational
behavior of the power grid into account. Grid operators can calibrate
the model to match the power grid of interest and run various
scenarios to assess the impact of one or two lines failing. There are
two problems with such tools: they depend on the knowledge of the grid
operator who determines which failure scenarios to explore. In
addition, due to the complexity of the simulation models it is
typically not possible to run scenarios where more than two components
fail. The monitoring approach proposed in this paper may complement
current grid operator practices.

There are numerous papers that describe distributed architectures that
can be used to monitor the state of the power grid. However, these
typically focus on the issue of data
collection~\cite{zanikolas2005taxonomy,yang2006power,gungor2010opportunities,dan2004wide,bakken2007towards,moslehi2010reliability,zhang2012monitoring,grilo2014load}
(i.e., loading levels of power lines, phase angles etc.) and do not
use any meaningful data aggregation mechanisms to quantify the
resilience with respect to cascading failures of the whole power grid.
In conclusion, as far as the authors are aware, there are no power
grid monitoring approaches that assess the vulnerability, with respect
to cascading failures, of an operational power grid in near
real-time.

\section{Conclusions}
\label{sec:conclusions}

In this paper we introduced a novel distributed computation framework
for network aggregates and showed how it can be used to assess the
resilience with respect to cascading failures of an operational power
grid in near real-time. We have enhanced a class of fast gossiping
algorithms~\cite{boyd2005gossip} with self-stabilizing mechanisms to
counter run-time network dynamics. To showcase the capabilities of our
approach, we exemplified how the robustness metric introduced
in~\cite{Koc:373,Koc:862} can be computed fast and reliable in a
distributed network - IEEE 118 power grid.

Our contribution has a number of desirable properties such as
scalability and robustness. Simulation results performed with both
real and synthetic data show that our approach achieves very fast
convergence times, influenced mainly by the diameter of the network
and only logarithmically by the number of nodes in the network. This
property is very important in the context of smart grids, where the
number of nodes deployed over a given area (a region or a country) is
expected to increase in the next few decades.

The precision of the computations can be fixed by modifying the size
of the messages exchanged in the network. This is a crucial property
for scalability, as the size of the messages is \emph{not} a function
of the number of nodes in the network. More importantly, the
computation error scales as $O(1/poly(N))$, meaning that the more
nodes a network has, the smaller the final error is. Finally, our
scheme preserves the anonymity of the participants in the network, as
it does not rely on unique identifiers for the nodes of the network.

The main message of this paper can be summarized in that we showed
that it is possible to compute complex aggregates of the operational
state of the nodes in a network in a fully distributed manner, fast and reliable at
runtime. As automatic control systems always include a measurement
phase, we see our contribution as the perfect candidate for the
measurement block for an automated distributed control scheme. While
this paper focused on the measurements of network properties, future
work will investigate the actuation part triggered by the availability
of results given by different power grid metrics.

\section*{Acknowledgment}

{\small
This work was partly funded by the NWO project \emph{RobuSmart: Increasing
  the Robustness of Smart Grids through distributed energy
  generation: a complex network approach}, grant number
647.000.001 and by the Rijksdienst voor Ondernemend Nederland 
grant TKISG01002 \emph{SG-BEMS}.

%\bibliographystyle{abbrv}
%\bibliography{health}

\end{document}